\theoremstyle{thmstyleone}
\newtheorem{lemma}{Lemma}
\newtheorem{proposition}{Proposition}
\newtheorem{corollary}{Corollary}
\newtheorem*{example*}{Example}
\newtheorem*{remark*}{Remark}
\newtheorem*{remarks*}{Remarks}
\newcommand{\dd}{\,\mathrm{d}}
\newcommand{\nn}{\nonumber}
\newcommand{\NN}{{\mathbb N}}
\newcommand{\modu}[1]{\left|{#1}\right|}
\newcommand{\B}[1]{\left({#1}\right)}
\newtheorem{dfn}{Definition}
\begin{document}

\title[Relative entropy]{\centering {Higher-order spectral form factors of \\ circular unitary ensemble}}
\author{\fnm{Sohail\textsuperscript{\dag}},~\fnm{Youyi} \sur{Huang}\textsuperscript{*},~\fnm{Lu} \sur{Wei}\textsuperscript{\dag}}
\affil{\orgname{\center \textsuperscript{\dag}Texas Tech University} \\ \orgaddress{\city{Lubbock}, \state{Texas} \postcode{79409}, \country{USA}}}
\affil{\orgname{\center \textsuperscript{*}University of Central Missouri} \\ \orgaddress{\city{Warrensburg}, \state{Missouri} \postcode{64093}, \country{USA}}}
\maketitle
\begin{abstract}

Spectral form factor (SFF), one of the key quantity from random matrix theory, serves as an important tool to probe universality in disordered quantum systems and quantum chaos. In this work, we present exact closed-form expressions for the second- and third-order SFFs in the circular unitary ensemble (CUE), valid for all real values of the time parameter, and analyze their asymptotic behavior in different regimes. In particular, for the second-order SFF, we derive an exact closed-form expression in terms of polygamma functions. In the limit of infinite matrix size, and when the time parameter is restricted to integer values, the second-order SFF reproduces the standard result established in earlier studies. When the time parameter is of order one relative to the matrix size, we demonstrate that the second-order SFF grows logarithmically with the ensemble dimension. For the third-order SFFs, a closed-form result in a special case is obtained by exploiting the translational invariance of CUE. 
\end{abstract}
\newpage

\section{Introduction and Main Results}\label{sec:intro}
Quantum chaos~\cite{gutzwiller1971periodic,heller1984scars,aurich1988hadamard,berry1987quantumchaology,Haake,stockmann1999quantumchaos,eckhardt1992semiclassical} provides a way to describe a chaotic classical system within the laws of quantum physics. In classical physics, the chaotic behavior of a physical system can be identified by looking at its phase space: for chaotic dynamics, the distance between any two trajectories in the phase space grows exponentially, whereas in the case of regular motion the growth in the distance may follow the power law, but never becomes exponential~\cite{strogatz2018nonlinear,lichtenberg1992regular}. Since the phase space looses its meaning for quantum systems because of uncertainly principle~\cite{Heisenberg1927,sakurai2017,robertson1929}, the characterization of quantum chaos in terms of the distance between trajectories is not applicable. The appropriate quantities to probe the chaotic behavior of quantum systems are based on the spectrum, eigenvectors of the Hamiltonian associated with the system and temporal evolution of the expectation values of some suitable observables. As one of the primary focuses of random matrix theory is the study of spectral properties of random matrices, it provides a powerful framework to study the quantum signature of chaos. SFF being intrinsically defined in terms of the spectrum serves as one of the useful quantities to probe the chaotic nature of a quantum system~\cite{Berry,Haake,Nick}. 

Although, in random matrix theory, large system size limits are considered very often because it greatly simplifies the mathematical analysis and reveals universal statistical properties that are independent of specific system details, it is also crucial to have finite dimensional results to model real-world systems with limited dimensions, exhibiting system specific characteristics and critical deviations from the universal behavior. In this work, we will be interested in studying the exact SFF and its higher order generalizations in CUE of finite size.
 
\subsection{Problem formulation}
CUE is a unitary invariant ensemble, i.e., it is the ensemble of $N \times N$ unitary matrices $U$ with the probability $P(U)dU$ of a randomly chosen unitary matrix $U$ being in the volume element $dU$ is invariant under the transformation $U \mapsto VUW$, where $V$ and $W$ are $N \times N$ unitary matrices~\cite{mehta1991random}. The eigenvalues of a matrix belonging to CUE, lie on the unit circle of the complex plane, i.e., all the eigenvalues are of the form $\lambda_j=e^{i \theta_j}$, where $\theta_j$ is the eigenangle associated with the eigenvalue $\lambda_j$. The joint probability distribution of the eigenangles in CUE is given by~\cite{mehta1991random,Haake}
\begin{align}
        p(\theta_1, \theta_2,...,\theta_N)=\frac{1}{\mathcal{Z}} \prod_{1 \leq j <k \leq N}\big|e^{i \theta_j} -e^{i \theta_k} \big|^2 \label{eq:JPDF of CUE}
\end{align}
    with $\mathcal{Z}$ being the normalization constant. Notice that the above distribution is invariant under the permutation of the indices $\{1,2,...,N\}$. The $n$-point correlation function $\rho(\theta_1,\theta_2,...,\theta_n)$ defined by~\cite{mehta1991random}
\begin{align}
        \rho(\theta_1,\theta_2,...,\theta_n)&:=\frac{N!}{(N-n)!}\prod_{j=n+1}^N \frac{1}{2 \pi}  \int_{0}^{2 \pi}  p(\theta_1, \theta_2,...,\theta_N) \dd \theta_j,
\end{align}
    can be written in terms of the kernel $K_N(\theta_j, \theta_k)$ as~\cite{mehta1991random} 
    \begin{align}
        \rho(\theta_1,\theta_2,...,\theta_n)&=\det [K_N(\theta_j, \theta_k)]_{j,k=1}^n,
    \end{align}
    where
    \begin{align}
    K_N(\theta,\phi)= \frac{1}{2 \pi}\sum_{l=0}^{N-1} e^{i l (\theta-\phi)}. \label{CUE_kernel}
\end{align}
The CUE kernel given by the above equation satisfies the following reproducing property~\cite{mehta1991random}
\begin{align}
    \int_{0}^{2\pi} K_N(\theta, \gamma)  K_N(\gamma, \phi) d\gamma=K_N(\theta, \phi).
\end{align}
 
One of the key distinguishing features of CUE is the translational invariance of the joint probability distribution function as given by Eq.~(\ref{eq:JPDF of CUE})~\cite{mehta1991random}, i.e.,
\begin{align}
    p(\theta_1+\phi, \theta_2+\phi,...,\theta_N+\phi)=p(\theta_1, \theta_2,...,\theta_N).
\end{align}
The kernel $K_N(\theta, \phi)$ being a function of $(\theta-\phi)$, the translational invariance of CUE is explicit.

The second-order SFF, one of our main objects of study, is defined in Definition~\ref{dfn:Second-order SFF} in terms of the microscopic eigenvalue density defined as follows. 
\begin{dfn}[\cite{Forrester2021stat}]\label{dfn:microscopic eigenvalue density}
    For a given ensemble of $N \times N$ random matrices, the microscopic eigenvalue density is defined by $n(\lambda):=\sum_{i=1}^N \delta(\lambda-\lambda_i)$ with the set of eigenvalues being denoted by $\{ \lambda_i \}_{i=1}^N$. The density-density correlation function $N_{(2)}(\lambda, \lambda')$ is defined as the covariance between $n(\lambda)$ and $n(\lambda')$, i.e.,
    \begin{align}
        N_{(2)}(\lambda, \lambda'):= \langle n(\lambda) n(\lambda') \rangle -\langle n(\lambda) \rangle \langle n(\lambda') \rangle,
    \end{align}
     where $\langle \cdot \rangle$ denotes the ensemble average.
\end{dfn}
\begin{dfn}[\cite{Forrester2021stat}] \label{dfn:Second-order SFF}
    The second-order SFF $S_N(k)$ is defined as the Fourier transform of $N_{(2)}(\lambda, \lambda')$ at the point $(k,-k)$, i.e.,
    \begin{align}
        S_N(k):=\int_{0}^{2 \pi} \int_{0}^{2 \pi}e^{ik(
        \lambda-\lambda')} N_{(2)}(\lambda, \lambda') d \lambda d\lambda' .\label{eq:Second order SFF CUE}
    \end{align}
\end{dfn}

The second-order SFF as defined above can be expressed as the covariance between the linear statistics $\sum_{i=1}^N e^{ik\lambda_{i}}$ and its complex conjugate $\sum_{i=1}^N e^{-ik\lambda_{i}}$~\cite{Forrester2021stat}. When $k \in \mathbb{N}$, the second-order SFF in CUE is well-known in the literature and is given by the following formula~\cite{Haake,HSW}
\begin{align}
    S_N(k)=\mathrm{min}(|k|,N). \label{SFF_int}
\end{align}
The Eq.~(\ref{SFF_int}) shows a ramp behavior for the interval $0 \leq k \leq N$ and a plateau behavior for $k >N$. 

In this work, we derive the exact closed-form formula for the second-order SFF in CUE for all real times in terms of the polygamma functions and analyze their asymptotic behavior in different regimes. Then, utilizing the translational invariance in CUE, we define third order SFF as the two-variable Fourier transform of the triple-density correlation function and derive closed-form formulas for the third-order SFF when one of the variables is assuming integer values. 

The paper is organized as follows: In Section~\ref{sec:Main results}, we discuss our main results. This Section is divided in two parts. In Section~\ref{subsec:Second order SFF}, we provide the closed-form formula for the second order SFF at any arbitrary time, with the asymptotic expansions at various limits. In Section~\ref{subsec:Higher order SFF}, we provide a semi-closed form expression for the third order SFF and a closed form expression in a special case. In Section~\ref{sec:Proofs of the main results}, we prove our main results. In Section~\ref{Discussions}, we discuss our findings and possible future directions emerging from our work.
\section{Main Results} \label{sec:Main results}
In this section, we introduce our main results. Proposition~\ref{thm:SFF} provides the exact closed-form formula for the second-order SFF which is valid for any size of the ensemble and any argument. Corollary~\ref{prop:k=Nt asymptotic}, Corollary~\ref{prop:k_O(1)} and Corollary~\ref{prop:Arbitrary_point_x_0} deal with the asymptotics of the second-order SFF in various limits. In proposition~\ref{prop: 3rd_order_SFF_at_(k,-k)} and Proposition~\ref{3rd SFF k,floor k}, we provide a semi-closed form expression for the third-order SFF and a closed-form expression in a special case, respectively.

Since all our results are expressed in terms of polygamma functions, we begin with a brief discussion of their basic properties. The $k$-th order polygamma function is defined as 
\begin{align}
    \psi_{k}(x):=\frac{\dd^{k+1}}{\dd x^{k+1}} \ln \Gamma(x),
\end{align}
where $\Gamma(x)$ is the gamma function. $\psi_{0}(x)$ and $\psi_{1}(x)$ are called the digamma and trigamma function, respectively. Polygamma function satisfy the difference equation
\begin{align}
    \psi_{k}(x+N)-\psi_{k}(x)=(-1)^k k!\sum_{i=0}^{N-1} \frac{1}{(x+i)^{k+1}}.\label{eq:Telescopic polygamma}
\end{align}
For $x >0$ and $x \notin \mathbb{N}$, the reflection formula~\cite{Luke} 
\begin{align}\label{eq:reflection}
   \psi_0(-x)=\psi_0(x+1)+\pi \cot(\pi x)
\end{align}
relates $\psi_{0}(x)$ and $\psi_{0}(-x)$.
\subsection{Second-order SFF} \label{subsec:Second order SFF}
\begin{proposition}\label{thm:SFF}
    The closed-form expression for the second-order SFF in CUE, as defined in Eq.~(\ref{eq:Second order SFF CUE}), is given by  
    \begin{align}
    S_N(k)&= \mathrm{min}\left(N,|k|- \frac{\sin(2\pi |k|)}{2\pi}\right)+\frac{\sin^2(\pi |k|)}{\pi^2} \bigg( \psi_0(N+|k|) + (N+|k|) \psi_{1}(N+|k|) \nn\\
    &\quad+ \psi_0\left(\left|N-|k|\right|+1\right) +\left|N-|k|\right| \psi_{1}\left(\left|N-|k|\right|+1\right)
    -2|k| \psi_{1}(|k|+1)\nn\\
    &\quad- 2\psi_0(|k|+1)\bigg).   \label{exact_SFF_1}
 \end{align}
 \end{proposition}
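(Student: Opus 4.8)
The plan is to compute $S_N(k)$ directly from its definition in terms of the density-density correlation function $N_{(2)}(\lambda,\lambda')$, which for a determinantal point process is expressible through the kernel $K_N$. First I would write the covariance structure of the microscopic density explicitly. For a determinantal ensemble one has the standard decomposition
\begin{align}
    N_{(2)}(\theta,\phi) = \rho(\theta)\delta(\theta-\phi) - K_N(\theta,\phi)K_N(\phi,\theta) + \text{(connected part)}, \nn
\end{align}
so that the two-point correlation gives $\langle n(\theta)n(\phi)\rangle = \rho_2(\theta,\phi) + \rho(\theta)\delta(\theta-\phi)$ with the diagonal Dirac contribution accounting for the self-correlation. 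Concretely, $N_{(2)}(\theta,\phi) = K_N(\theta,\theta)\delta(\theta-\phi) - |K_N(\theta,\phi)|^2$, using the determinantal form $\rho_2 = K_N(\theta,\theta)K_N(\phi,\phi) - |K_N(\theta,\phi)|^2$ together with $\langle n(\theta)\rangle\langle n(\phi)\rangle = K_N(\theta,\theta)K_N(\phi,\phi)$.

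Substituting into Definition~\ref{dfn:Second-order SFF}, the problem splits into two Fourier integrals. The $\delta$-term integrates to $\int_0^{2\pi} K_N(\theta,\theta)\,d\theta = N$ (since $K_N(\theta,\theta)=N/(2\pi)$), contributing a constant. For the $|K_N|^2$ term I would insert the explicit finite-geometric-sum form of the kernel from Eq.~(\ref{CUE_kernel}), so that
\begin{align}
    |K_N(\theta,\phi)|^2 = \frac{1}{4\pi^2}\sum_{l,m=0}^{N-1} e^{i(l-m)(\theta-\phi)}, \nn
\end{align}
and then carry out the double integral $\int_0^{2\pi}\int_0^{2\pi} e^{ik(\theta-\phi)}|K_N(\theta,\phi)|^2\,d\theta\,d\phi$. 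The key point is that for noninteger $k$ the individual factors $\int_0^{2\pi} e^{i(k+l-m)\theta}\,d\theta$ no longer vanish but evaluate to $(e^{2\pi i(k+l-m)}-1)/(i(k+l-m))$, producing $\mathrm{sinc}$-type factors; collecting these over the double sum is where the $\sin^2(\pi|k|)$ prefactor and the reciprocal-power sums will emerge.

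The main obstacle, and the technical heart of the proof, will be resumming the resulting double sum $\sum_{l,m=0}^{N-1}$ of terms like $1/(k+l-m)^2$ into closed form. I would change variables to the difference index $r=l-m$, which runs from $-(N-1)$ to $N-1$ with multiplicity $N-|r|$, reducing everything to single sums of the form $\sum_r (N-|r|)/(k+r)^2$ and $\sum_r (N-|r|)/(k+r)$. These are precisely finite sums of reciprocal powers, which by the difference equation~(\ref{eq:Telescopic polygamma}) telescope into differences of polygamma functions $\psi_0$ and $\psi_1$ evaluated at shifted arguments such as $N+|k|$, $|N-|k||+1$, and $|k|+1$. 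The bookkeeping must carefully separate the ramp contribution $\min(N,\cdot)$ and track the absolute values arising from $|k|$ and from $|N-|k||$.

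Throughout I would exploit the translational invariance noted in the excerpt, which guarantees that $N_{(2)}$ depends only on $\theta-\phi$, so one of the two angular integrations is trivial and only a single effective Fourier integral of a one-variable function must be controlled. The final step is algebraic consolidation: combining the $\psi_0$ and $\psi_1$ contributions, factoring out $\sin^2(\pi|k|)/\pi^2$, and identifying the non-oscillatory remainder with $|k|-\sin(2\pi|k|)/(2\pi)$, capped at $N$ by the ramp-to-plateau transition, to match Eq.~(\ref{exact_SFF_1}) exactly. As a consistency check I would verify that at integer $k$ the $\sin^2(\pi|k|)$ prefactor kills all polygamma terms and the formula collapses to $\min(|k|,N)$, recovering the known result in Eq.~(\ref{SFF_int}).
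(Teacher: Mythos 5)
Your proposal follows essentially the same route as the paper's proof: write $S_N(k)=N-\int\!\!\int e^{ik(\theta-\phi)}|K_N(\theta,\phi)|^2\,\mathrm{d}\theta\,\mathrm{d}\phi$, insert the summation form of the kernel, reduce the double sum to $\sum_{\alpha}(N-|\alpha|)/(k+\alpha)^2$ via the difference index with multiplicity $N-|\alpha|$, and convert to polygamma functions through the telescoping identity. The only detail you leave implicit is that the $|k|-\sin(2\pi|k|)/(2\pi)$ term and the removal of the singularities at $|k|=0$ and $|k|=N$ come from the reflection formula for $\psi_0$, $\psi_1$ and the identity $x\psi_1(x)+\psi_0(x)=x\psi_1(x+1)+\psi_0(x+1)$, which is exactly the bookkeeping the paper carries out.
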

The proof can be found in Section~\ref{sec:Proofs of the main results}. For discrete times $k \in \NN$, the above expression simplifies to 
\begin{equation}
     S_N(k) =\mathrm{min}\left(N,|k|\right) ,\label{exact_SFF_2}
\end{equation}
thereby retrieving the well-known result~(\ref{SFF_int}).  This integer-time result captures the broad ramp–plateau behavior as seen from the orange curve in Figure~\ref{fig:2ndSFF},  but provides only a coarse, discretized view of the dynamics. In contrast, the blue curve shown here represents our newly derived complete result~(\ref{exact_SFF_1}), valid for all times. The exact expression reveals the full continuous structure of the second-order SFF, capturing the intermediate behavior between integer points and providing a more detailed characterization of the ramp and the plateau. By going beyond the integer-time approximation, our result provides more accurate understanding of the temporal evolution of spectral correlations in the CUE.
\begin{figure}[H]
    \centering
    \includegraphics[width=0.6\linewidth]{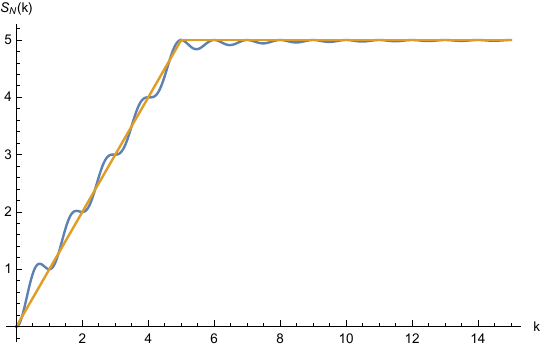}
    \caption{Second-order SFF with $N=5$. The Blue curve represents the closed-form expression of the SFF given in Eq.~(\ref{exact_SFF_1}), while the orange curve shows the discrete SFF at integer $k$ as given by Eq.~(\ref{exact_SFF_2}).}
    \label{fig:2ndSFF}
\end{figure}

The next two corollaries describe the asymptotic behavior of the second-order SFF given in Eq.~(\ref{exact_SFF_1}) in the regimes $k=\mathcal{O}(N)$ and $k=\mathcal{O}(1)$. Proofs are provided in Section~\ref{sec:Proofs of the main results}.
\begin{corollary} \label{prop:k=Nt asymptotic}
    Let $k$ be of the order of $N$, i.e., $k=t N$ with $t$ being the local coordinate which does not scale with $N$. Then, in the limit $N \to \infty$, we have 
     \begin{align}
    S_N(t N)= \min\left(N,N|t|- \frac{\sin(2\pi N|t|)}{2\pi}\right)+ \frac{\sin^2(\pi N|t|)}{\pi^2} \ln \frac{\big|1-|t|^2\big|}{|t|^2}+\mathcal{O}\left(N^{-2}\right).
 \end{align}
\end{corollary}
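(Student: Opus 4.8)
The plan is to start from the exact formula~(\ref{exact_SFF_1}), substitute $k=tN$, and expand each polygamma term with the standard large-argument asymptotics
\begin{align}
\psi_0(x) = \ln x - \frac{1}{2x} - \frac{1}{12x^2} + \mathcal{O}(x^{-4}), \qquad
\psi_1(x) = \frac{1}{x} + \frac{1}{2x^2} + \frac{1}{6x^3} + \mathcal{O}(x^{-5}).
\end{align}
The prefactor $\min(N,|k|-\sin(2\pi|k|)/(2\pi))$ and the coefficient $\sin^2(\pi|k|)/\pi^2$ require no expansion: under $k=tN$ they reproduce verbatim the first term and the coefficient appearing in the statement. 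Hence the whole problem reduces to showing that the bracketed combination of polygamma functions converges to $\ln\big(|1-|t|^2|/|t|^2\big)$ with an error of order $N^{-2}$, since multiplying by the bounded factor $\sin^2(\pi N|t|)/\pi^2$ then preserves that error.

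The key organizational step is to group the six polygamma terms into three blocks by their argument, $a:=N+|k|=N(1+|t|)$, $b:=|N-|k||+1$, and $c:=|k|+1$. Writing $|N-|k||=b-1$ and $|k|=c-1$, the bracket becomes
\begin{align}
B = \bigl[\psi_0(a)+a\psi_1(a)\bigr] + \bigl[\psi_0(b)+(b-1)\psi_1(b)\bigr] - 2\bigl[\psi_0(c)+(c-1)\psi_1(c)\bigr].
\end{align}
The crucial observation is that $\psi_0(x)+x\psi_1(x) = \ln x + 1 + \mathcal{O}(x^{-2})$, i.e.\ the $1/x$ contributions cancel identically. Thus each block yields a logarithm plus a constant, and the only surviving $\mathcal{O}(N^{-1})$ pieces come from the extra $-\psi_1(x)$ factors in the $b$ and $c$ blocks. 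Collecting the constants ($1+1-2=0$) and the logarithms leaves $\ln(ab/c^2)$ together with the $\mathcal{O}(N^{-1})$ remainder $-1/b+2/c$.

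The main obstacle, and the only genuinely delicate point, is to show that these leftover $\mathcal{O}(N^{-1})$ terms cancel rather than spoiling the result at order $N^{-1}$. To handle this I would expand the logarithm carefully: writing $b=N|1-|t||\,(1+(N|1-|t||)^{-1})$ and $c=N|t|\,(1+(N|t|)^{-1})$ and using $\ln(1+\epsilon)=\epsilon+\mathcal{O}(\epsilon^2)$ produces exactly the terms $+1/(N|1-|t||)-2/(N|t|)$, which cancel against $-1/b+2/c = -1/(N|1-|t||)+2/(N|t|)+\mathcal{O}(N^{-2})$. The leading logarithm that remains is $\ln\big((1+|t|)|1-|t||/|t|^2\big)=\ln\big(|1-|t|^2|/|t|^2\big)$, and everything else is $\mathcal{O}(N^{-2})$, which establishes the claim. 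Finally I would record that the expansion is valid for $t$ fixed with $|t|\notin\{0,1\}$, so that $a,b,c$ all grow linearly in $N$ and the limiting logarithm stays finite; the excluded values $|t|=0,1$ are precisely the singular points of $\ln\big(|1-|t|^2|/|t|^2\big)$.
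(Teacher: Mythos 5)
Your proposal is correct and follows essentially the same route as the paper: expand each block via $\psi_0(x)+x\psi_1(x)=1+\ln x+\mathcal{O}(x^{-2})$, collect the logarithms into $\ln\big(|1-|t|^2|/|t|^2\big)$, and verify that the leftover $\mathcal{O}(N^{-1})$ pieces cancel pairwise (the paper phrases this as $\frac{1}{x}-\frac{1}{1+x}=\mathcal{O}(x^{-2})$, you phrase it by expanding $\ln(1+\epsilon)$; these are the same cancellation). The only addition in the paper is a separate treatment of the points $t\in\{0,1\}$ that you exclude: there $k=tN$ is an integer, the exact formula reduces to $\min(N,N|t|)$, and the logarithmic term is reconciled with the stated asymptotic via the observation that $\sin^2(\pi N|t|)\ln\big(|1-|t|^2|/|t|^2\big)\to 0$ as $t$ approaches $0$ or $1$.
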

As a direct consequence of the above corollary, we obtain
\begin{align}
    \lim_{N \to \infty} \frac{1}{N} S_N(t N)= \min\left(1,|t|\right),
\end{align}
 reproducing the well-established large-$N$ behavior~\cite{Forrester2021stat}.
\begin{corollary} \label{prop:k_O(1)}
    Let $k$ be of the order $1$ with respect to $N$. Then we have
    \begin{align}
        S_N(k) &=\frac{2\sin^2(\pi |k|)}{\pi^2}\ln N+h(k)+\mathcal{O}\left(N^{-2}\right),
    \end{align}
    where 
    \begin{align}
        h(k)&:=\left(1-\frac{2\sin^2(\pi |k|)}{\pi^2} \psi_{1}(|k|+1)\right)|k|-\frac{2\sin^2(\pi |k|)}{\pi^2}\psi_0(|k|+1)\nn\\
        &\quad- \frac{\sin(2\pi |k|)}{2\pi}+ \frac{2\sin^2(\pi |k|)}{\pi^2}.
    \end{align}
\end{corollary}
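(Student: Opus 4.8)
The plan is to start directly from the exact closed-form expression in Proposition~\ref{thm:SFF} and perform a large-$N$ expansion while holding $k=\mathcal{O}(1)$ fixed. Since $|k|$ does not scale with $N$, for all sufficiently large $N$ we have $|k|<N$, so the two piecewise simplifications $\min\left(N,|k|-\frac{\sin(2\pi|k|)}{2\pi}\right)=|k|-\frac{\sin(2\pi|k|)}{2\pi}$ and $\big|N-|k|\big|=N-|k|$ hold. This reduces the problem to expanding the four polygamma terms whose arguments $N+|k|$ and $N-|k|+1$ grow linearly in $N$, while the remaining two terms $-2|k|\,\psi_1(|k|+1)-2\psi_0(|k|+1)$ are already $N$-independent and contribute directly to $h(k)$.

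Next I would insert the standard asymptotic expansions $\psi_0(x)=\ln x-\frac{1}{2x}-\frac{1}{12x^2}+\mathcal{O}(x^{-4})$ and $\psi_1(x)=\frac{1}{x}+\frac{1}{2x^2}+\frac{1}{6x^3}+\mathcal{O}(x^{-5})$ at $x=N+|k|$ and $x=N-|k|+1$. The leading logarithms combine as $\ln(N+|k|)+\ln(N-|k|+1)=2\ln N+\mathcal{O}(N^{-1})$, which after multiplication by $\frac{\sin^2(\pi|k|)}{\pi^2}$ produces the announced $\frac{2\sin^2(\pi|k|)}{\pi^2}\ln N$ term. The products $(N+|k|)\psi_1(N+|k|)$ and $(N-|k|)\psi_1(N-|k|+1)$ each tend to $1$, so they supply the constant $\frac{2\sin^2(\pi|k|)}{\pi^2}$ appearing in $h(k)$; assembling the $N$-independent remainder together with the two fixed polygamma terms then reproduces $h(k)$ term by term.

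The delicate step, and the main obstacle, is verifying that all the $\mathcal{O}(N^{-1})$ contributions cancel, so the error is genuinely $\mathcal{O}(N^{-2})$ rather than $\mathcal{O}(N^{-1})$. Two distinct sources of $1/N$ terms appear and must be tracked against one another. First, within the pair $\psi_0(N+|k|)+(N+|k|)\psi_1(N+|k|)$ the digamma correction $-\frac{1}{2(N+|k|)}$ is exactly cancelled by the $+\frac{1}{2(N+|k|)}$ coming from $x\psi_1(x)=1+\frac{1}{2x}+\cdots$, leaving $\ln(N+|k|)+1+\mathcal{O}(N^{-2})$ with no net $1/N$ piece. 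Second, the pair $\psi_0(N-|k|+1)+(N-|k|)\psi_1(N-|k|+1)$ is asymmetric, because the prefactor is $N-|k|$ whereas the polygamma argument is $N-|k|+1$; this mismatch leaves a residual $-\frac{1}{N-|k|+1}$ term. I would then show this residual is precisely compensated by the subleading parts of the logarithms: expanding $\ln(N+|k|)+\ln(N-|k|+1)=2\ln N+\frac{1}{N}+\mathcal{O}(N^{-2})$, the $+\frac{1}{N}$ cancels the $-\frac{1}{N-|k|+1}=-\frac{1}{N}+\mathcal{O}(N^{-2})$. Once this cancellation is confirmed, collecting the surviving $N$-independent terms yields exactly $h(k)$ and closes the proof; the only real care required is consistent bookkeeping of the $1/N$ orders across the logarithmic and polygamma expansions.
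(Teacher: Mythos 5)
Your proposal is correct and follows essentially the same route as the paper: both start from the exact formula of Proposition~\ref{thm:SFF}, expand the polygamma pairs at arguments $N+|k|$ and $N-|k|+1$ via the standard asymptotic series, and rely on exactly the cancellations you describe (the internal cancellation in $\psi_0(x)+x\psi_1(x)$, packaged in the paper as Eq.~(\ref{X1}), and the compensation of the residual $-\frac{1}{N-|k|+1}$ from Eq.~(\ref{X2}) against the $+\frac{1}{N}$ arising from $\ln(N+|k|)+\ln(N-|k|+1)-2\ln N$) to secure the $\mathcal{O}(N^{-2})$ error. No substantive differences.
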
  
For $k$ of the order one, the asymptotic form of the second-order SFF reveals three distinct contributions: a leading term exhibiting logarithmic growth with the ensemble dimension $N$, a finite term $h(k)$ of order one, and a subleading correction that scales as $N^{-2}$.

To closely examine the behavior of the second-order SFF around an arbitrary point $x_0:=\frac{k}{N}$, in the rescaled coordinate, we define the following object of interest: 
\begin{align}
    \Delta(x_0,\tau):= S_N\left(N x_0+\tau\right)-\min(N,Nx_0), \label{eq:Fluctuation} 
\end{align}
where $S_N(k)$ is the second-order SFF as given in Eq.~(\ref{exact_SFF_1}), and $\tau <N$ is a small parameter that measures the distance from the point $x_0$. The condition $\tau < N$ ensures that this shift is small relative to the system size. The quantity $\Delta(x_0,\tau)$ essentially provides information about the local fluctuations of the second-order SFF around a given point $x_0$.
\begin{corollary}\label{prop:Arbitrary_point_x_0}
Case-I: Define
\begin{align}
    \mu_n (x_0):=\frac{1}{(1+x_0)^n}+\frac{(-1)^n}{(1-x_0)^n}-\frac{2}{x^n_0}.
\end{align}
Then for $0<x_0<1$,
    \begin{align}
      \Delta(x_0,\tau)&=\tau- \frac{\sin(2\pi (N x_0+\tau))}{2\pi}+\frac{\sin^2(\pi (N x_0+\tau))}{\pi^2 } \bigg( \ln\frac{1-x_0^2}{x_0^2}+ \frac{\tau}{N}\mu_1(x_0)\nn\\
    &\quad+\frac{1-6\tau^2}{12N^2} \mu_2(x_0)+\frac{\tau(2\tau^2-1)}{6N^3}\mu_3(x_0)\bigg)+\mathcal{O}\left(N^{-4}\right). 
    \end{align}
    
   Case-II: When $x_0=0$, we have 
    \begin{align}
      \Delta(0,\tau)&=|\tau|- \frac{\sin(2\pi |\tau|)}{2\pi}+\frac{\sin^2(\pi |\tau|)}{\pi^2 } \bigg(2\ln N+\frac{1-6\tau^2}{6N^2}-2\psi_0(\modu{\tau})\nn\\
      &\quad-2 \modu{\tau} \psi_{1}(\modu{\tau})+2\bigg)+\mathcal{O}\left(N^{-4}\right).
    \end{align}
    Case-III: When $x_0=1$, we have 
    \begin{align}
      \Delta(1,\tau)&= \tau- \frac{\sin(2\pi \tau)}{2\pi}+\frac{\sin^2(\pi \tau)}{\pi^2 } \bigg(-\ln N- \frac{3\tau}{2N}-\frac{7(1-6\tau^2)}{48N^2}-\frac{5(2\tau^3-\tau)}{16N^3}\nn\\
    &\quad+\psi_0(\modu{\tau})+\modu{\tau} \psi_{1}(\modu{\tau})-1+ \ln 2\bigg)+\mathcal{O}\left(N^{-4}\right)
    \end{align} 
    for $\tau <0$,
    and
    \begin{align}
       \Delta(1,\tau) &=\frac{\sin^2(\pi \tau)}{\pi^2 } \bigg(-\ln N- \frac{3\tau}{2N}-\frac{7(1-6\tau^2)}{48N^2}-\frac{5(2\tau^3-\tau)}{16N^3}\nn\\
    &\quad+\psi_0(\modu{\tau})+\modu{\tau} \psi_{1}(\modu{\tau})-1+ \ln 2\bigg)+\mathcal{O}\left(N^{-4}\right)
    \end{align}
      for $\tau >0$.
\end{corollary}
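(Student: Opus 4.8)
The plan is to begin from the exact formula~(\ref{exact_SFF_1}) of Proposition~\ref{thm:SFF}, set $k=Nx_0+\tau$, and expand the result in powers of $1/N$. The first move is to compress the polygamma block. Applying the $N=1$ instance of the difference equation~(\ref{eq:Telescopic polygamma}), that is, $\psi_0(x+1)-\psi_0(x)=1/x$ and $\psi_1(x+1)-\psi_1(x)=-1/x^2$, one verifies the identities $\psi_0(a+1)+a\,\psi_1(a+1)=\psi_0(a)+a\,\psi_1(a)$ and $-2\psi_0(b+1)-2b\,\psi_1(b+1)=-2\bigl[\psi_0(b)+b\,\psi_1(b)\bigr]$. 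Writing $g(z):=\psi_0(z)+z\,\psi_1(z)$, the whole bracket in~(\ref{exact_SFF_1}) collapses to the symmetric combination
\begin{align}
g(N+|k|)+g\!\left(\modu{N-|k|}\right)-2\,g(|k|),
\end{align}
which serves as the common backbone of all three cases.

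The second ingredient is the large-argument asymptotics of $g$. From $\psi_0(z)=\ln z-\tfrac{1}{2z}-\tfrac{1}{12z^2}+O(z^{-4})$ and $z\,\psi_1(z)=1+\tfrac{1}{2z}+\tfrac{1}{6z^2}+O(z^{-4})$ the $1/z$ terms cancel, giving $g(z)=\ln z+1+\tfrac{1}{12z^2}+O(z^{-4})$. The three cases differ only in which of the arguments $N+|k|$, $\modu{N-|k|}$, $|k|$ are of order $N$ and which remain of order one. For $0<x_0<1$ (Case I) all three, namely $N(1+x_0)+\tau$, $N(1-x_0)-\tau$ and $Nx_0+\tau$, are large, so I would expand each $g$ using the asymptotics above, writing a generic argument as $Ny+s$ and Taylor-expanding $\ln(Ny+s)=\ln N+\ln y+\tfrac{s}{Ny}-\tfrac{s^2}{2N^2y^2}+\tfrac{s^3}{3N^3y^3}+\cdots$ together with the $\tfrac{1}{12z^2}$ correction. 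For $x_0=0$ (Case II) the argument $|k|=\modu{\tau}$ is of order one, so $g(\modu{\tau})$ is kept exact and only $g(N+\modu{\tau})+g(N-\modu{\tau})$ is expanded; for $x_0=1$ (Case III) it is $\modu{N-|k|}=\modu{\tau}$ that is of order one, so $g(\modu{\tau})$ is kept exact while $g(2N+\tau)-2\,g(N+\tau)$ is expanded. This regime split is exactly what makes the order-one polygamma remainders $\psi_0(\modu{\tau})+\modu{\tau}\,\psi_1(\modu{\tau})$ appear in Cases II and III but not in Case I.

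The third step is to resolve the leading $\min$ term and the subtraction of $\min(N,Nx_0)$ in~(\ref{eq:Fluctuation}). For $0\le x_0<1$ one has $|k|<N$ for large $N$, so $\min$ selects $|k|-\sin(2\pi|k|)/(2\pi)$ and $\min(N,Nx_0)=Nx_0$, leaving the contribution $\tau-\sin(2\pi(Nx_0+\tau))/(2\pi)$. At $x_0=1$, where $\min(N,Nx_0)=N$ and, since $N\in\NN$, the sine arguments reduce to $\sin(2\pi\tau)$ and $\sin^2(\pi\tau)$, the selection switches with the sign of $\tau$: for $\tau<0$ one has $|k|=N+\tau<N$ and the leading term is $\tau-\sin(2\pi\tau)/(2\pi)$, whereas for $\tau>0$ the inequality $2\pi\tau\ge\sin(2\pi\tau)$ forces $\min$ to pick $N$ and this contribution vanishes. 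This dichotomy is precisely why the two $\tau$-branches of Case III differ only in the first two terms.

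What remains is the bookkeeping of the $1/N$ expansion. In Case I the three arguments carry shifts $s=\tau,\,-\tau,\,\tau$ against weights $+1,\,+1,\,-2$; because the $N-|k|$ term carries the opposite shift $-\tau$, each odd power of $s$ flips sign, so the coefficient at order $n$ collapses to $(1+x_0)^{-n}+(-1)^n(1-x_0)^{-n}-2x_0^{-n}=\mu_n(x_0)$, while the numerical factors from the log and the $\tfrac{1}{12z^2}$ correction combine to give the prefactors $\tau$, $(1-6\tau^2)/12$ and $\tau(2\tau^2-1)/6$ at orders $N^{-1}$, $N^{-2}$, $N^{-3}$. I expect the principal obstacle to be exactly this accounting to order $N^{-3}$: the $\tfrac{1}{12z^2}$ piece of $g$ must be retained, since it feeds both $N^{-2}$ and $N^{-3}$, alongside the cubic log term, the absolute value $\modu{N-|k|}$ and its shift sign must be tracked with care, and in Case III the expansion of $g(2N+\tau)-2\,g(N+\tau)$, which is what produces the constant $\ln 2-1$ and the asymmetric coefficients $-\tfrac{3\tau}{2N}$, $-\tfrac{7(1-6\tau^2)}{48N^2}$, $-\tfrac{5(2\tau^3-\tau)}{16N^3}$, must be merged consistently with the exact order-one remainder $\psi_0(\modu{\tau})+\modu{\tau}\,\psi_1(\modu{\tau})$.
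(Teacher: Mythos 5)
Your proposal is correct and follows essentially the same route as the paper: it reduces the polygamma block of Eq.~(\ref{exact_SFF_1}) to the combination $g(N+|k|)+g(|N-|k||)-2g(|k|)$ with $g(z)=\psi_0(z)+z\,\psi_1(z)$ (which is exactly the paper's intermediate form~(\ref{AA_1})), expands $g(z)=\ln z+1+\tfrac{1}{12z^2}+\mathcal{O}(z^{-4})$ for whichever arguments are of order $N$, and keeps $g(|\tau|)$ exact in Cases II and III. Your accounting of the $\mu_n(x_0)$ coefficients, of the $\min$ selections, and of the $\tau\lessgtr 0$ dichotomy at $x_0=1$ all reproduce the stated results.
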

\subsection{Higher-order SFF} \label{subsec:Higher order SFF}
While the second-order SFF is an invaluable tool to probe the chaotic nature of a quantum system, it relies completely on the two-point correlation function and therefore captures only short-range features such as level repulsion and local spectral fluctuations. To obtain a detailed and more comprehensive picture of quantum chaos, it is desirable to go beyond two-point correlations and incorporate higher-order correlation functions in order to probe long-range structures such as spectral rigidity, suppressed fluctuations, and universal scaling laws. By extending the analysis to three-point and higher correlations, one gains access to information that remains invisible at the level of pairwise correlations, enabling a deeper characterization of chaotic dynamics and a clearer distinction among chaotic, integrable, and pseudorandom systems. Here, we aim to find an exact closed-form expression for the third-order SFF in CUE.

One can generalize the density-density correlation function to third-order as the cumulant between $n(\lambda),n(\lambda')$ and $n(\lambda'')$, i.e.,
\begin{align}
    N_{(3)}(\lambda, \lambda', \lambda''):=\kappa(n(\lambda),n(\lambda'),n(\lambda'')),
\end{align}
where $\kappa$ denotes the joint cumulant and $n(\lambda)$ is the microscopic eigenvalue density (see Definition~\ref{dfn:microscopic eigenvalue density}).
It follows that $N_{(3)}$ is translationally invariant for CUE, i.e., $ N_{(3)}(\lambda, \lambda', \lambda'')= N_{(3)}(\lambda-\lambda'', \lambda'-\lambda'', 0)$. The translational invariance greatly simplifies the mathematical form of the third-order SFF, and in certain cases it enables to provide closed-form expressions by decoupling the multiple summations arising from the summation form of the CUE kernel (see Eq.~(\ref{CUE_kernel})). In CUE, the explicit form of triple-density correlation function $N_{(3)}(\lambda,\lambda',0)$ is given by
    \begin{align}
      N_{(3)}(\lambda,\lambda',0)&= \rho(\lambda, \lambda',0)+\rho(\lambda',0) \delta(\lambda- \lambda')+\rho(-\lambda',0) \delta(\lambda)+\rho(\lambda, \lambda') \delta(\lambda')\nn\\
      &+\frac{N}{2\pi} \delta(\lambda) \delta(\lambda')+ 2 \left(\frac{N}{2\pi}\right)^3-\frac{N}{2\pi} \rho(\lambda, \lambda')-\left(\frac{N}{2\pi}\right)^2 \delta(\lambda -\lambda')\nn\\
      &-\frac{N}{2\pi} \rho(\lambda, 0)-\left(\frac{N}{2\pi}\right)^2 \delta(\lambda )-\frac{N}{2\pi} \rho(\lambda', 0)-\left(\frac{N}{2\pi}\right)^2 \delta(\lambda').
    \end{align}
    
    Now, we define the third-order SFF in terms of the triple-density correlation function $N_{(3)}(\lambda,\lambda',0)$ as follows.
\begin{dfn}
     The third-order SFF is defined as the Fourier transform of $N_{(3)}(\lambda, \lambda',0)$ at $(k,-k)$, i.e.,
    \begin{align}
        S_{N}^{(3)}(k,-k):=\int_{0}^{2 \pi} \int_{0}^{2 \pi}e^{ik(
        \lambda-\lambda')} N_{(3)}(\lambda, \lambda',0) \dd \lambda \dd \lambda' .\label{eq:Third-Order_SFF}
    \end{align}
\end{dfn}
The third-order SFF, as defined in Eq.~(\ref{eq:Third-Order_SFF}), is in general a complex-valued function. In the following proposition, we present a semi-closed form expression for the real part.
\begin{proposition} \label{prop: 3rd_order_SFF_at_(k,-k)}
The real part of third-order SFF $S_{N}^{(3)}(k,-k)$ is given by
  \begin{align}
      \mathrm{Re}\left(S_{N}^{(3)}(k,-k)\right)&= f(N,k)+f(N,-k), \label{eq:Semi-closed third-order SFF}
  \end{align}
  where 
\begin{align}
    f(N,k)&:=-\frac{\sin ^2(\pi  k) }{\pi ^3}\sum _{l=1}^N \psi_0 (k+l) \psi_0 (-k-l+N+1)\nn\\
    &\quad+\frac{\sin (\pi  k)}{\pi ^3 k}\bigg(\left(-2 k^2-3 k N+4 \pi ^2 N (k+N)+k-N^2\right) \sin (\pi  k)\nn\\
    &\quad-2 \pi  \left(2 \pi ^2-1\right) k (k+N) \cos (\pi  k)\bigg)\psi_0 (k+N)+\frac{k+N}{\pi ^3}\sin ^2(\pi  k)\nn\\
    &\quad \times \psi_0 (k+N)^2+\frac{4 \pi ^2-1}{\pi ^3}\sin (\pi  k) (\pi  k \cos (\pi  k)-N \sin (\pi  k))\psi_0 (k)\nn\\
    &\quad+\frac{\left(4-\pi ^2+4 \pi ^4\right) N-2}{2 \pi ^3}\sin ^2(\pi  k)+\frac{3 \left(1-4 \pi ^2\right) N}{4 \pi}\nn\\
    &\quad-\frac{2 k^2+N^2-4 \pi ^2 N^2}{2 \pi ^2 k}\sin (\pi  k) \cos (\pi  k).
\end{align}
\end{proposition}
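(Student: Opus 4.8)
\section*{Proof proposal for Proposition~\ref{prop: 3rd_order_SFF_at_(k,-k)}}

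The plan is to evaluate the double integral in Eq.~(\ref{eq:Third-Order_SFF}) directly by substituting the explicit expression for $N_{(3)}(\lambda,\lambda',0)$ and computing the Fourier transform term by term. The eleven terms split naturally into three groups. First, the delta-function terms collapse one or both integrations: for instance $\frac{N}{2\pi}\delta(\lambda)\delta(\lambda')$ contributes the constant $\frac{N}{2\pi}$, the terms $\left(\frac{N}{2\pi}\right)^2\delta(\lambda-\lambda')$, $\delta(\lambda)$, $\delta(\lambda')$ reduce to single elementary integrals, and terms such as $\rho(\lambda',0)\delta(\lambda-\lambda')$ reduce to a single Fourier integral of a two-point correlation function. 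Second, the constant term $2\left(\frac{N}{2\pi}\right)^3$ factorizes into $2\left(\frac{N}{2\pi}\right)^3\bigl|\int_0^{2\pi}e^{ik\lambda}\dd\lambda\bigr|^2=2\left(\frac{N}{2\pi}\right)^3\frac{4\sin^2(\pi k)}{k^2}$ for real $k$. Third, and most importantly, the genuine correlation terms $\rho(\lambda,\lambda',0)$, $\rho(\lambda,\lambda')$, $\rho(\lambda,0)$ and $\rho(\lambda',0)$ must be expanded through the determinantal formula $\rho=\det[K_N]$ together with the kernel representation in Eq.~(\ref{CUE_kernel}).

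The computational engine is the elementary identity $\int_0^{2\pi}e^{i(k+m)\lambda}\dd\lambda=\frac{e^{2\pi i k}-1}{i(k+m)}$, valid for real $k$ and integer $m$ with $k+m\neq0$. Since each kernel is a sum of exponentials $e^{im(\theta-\phi)}$ over $m\in\{0,\dots,N-1\}$, expanding the $3\times3$ determinant for $\rho(\lambda,\lambda',0)$ produces six terms, each a product of three kernels; after multiplying by $e^{ik(\lambda-\lambda')}$ and integrating over $\lambda$ and $\lambda'$, every term becomes a sum over kernel indices of rational expressions whose denominators are products of factors $(k+m)$. I would carry out the integration for each of the six determinant terms separately, tracking which index couples to $\lambda$ and which to $\lambda'$. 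The key reduction is then the observation that sums of the form $\sum_{m=0}^{N-1}\frac{1}{k+m}$ telescope to $\psi_0(k+N)-\psi_0(k)$ via the difference equation in Eq.~(\ref{eq:Telescopic polygamma}); more generally, the single and double reciprocal sums reorganize into the digamma values $\psi_0(k+N)$, $\psi_0(k)$, the square $\psi_0(k+N)^2$, and the irreducible double sum $\sum_{l=1}^N\psi_0(k+l)\,\psi_0(-k-l+N+1)$ that appears in the statement.

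The hard part will be the systematic bookkeeping of the correlation contribution and the extraction of its real part. Two subtleties demand care. First, the double sum arising from the fully connected piece of the $3\times3$ determinant does not decouple: once one index sum is converted to a digamma function it remains coupled to the second summation index, producing precisely the non-separable sum $\sum_{l=1}^N\psi_0(k+l)\,\psi_0(-k-l+N+1)$ that obstructs a fully closed form and forces the ``semi-closed'' presentation. Second, the factor $e^{2\pi i k}-1=2i\,e^{i\pi k}\sin(\pi k)$ generated by each integral carries both $\sin(\pi k)$ and $\cos(\pi k)$; collecting these over all terms and combining them with the $(k+m)^{-1}$ denominators is what generates the trigonometric prefactors $\sin^2(\pi k)$ and $\sin(\pi k)\cos(\pi k)$ together with the rational-in-$k$ coefficients of $f(N,k)$. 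After assembling every contribution and taking the real part, the manifest $k\to-k$ symmetry of $\mathrm{Re}\,S_N^{(3)}(k,-k)$---which follows from the reality of $N_{(3)}$ and the relation $S_N^{(3)}(-k,k)=\overline{S_N^{(3)}(k,-k)}$---lets the result be packaged in the symmetric form $f(N,k)+f(N,-k)$, completing the proof.
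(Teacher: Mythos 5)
Your plan is essentially the paper's proof: substitute the explicit $N_{(3)}(\lambda,\lambda',0)$ into Eq.~(\ref{eq:Third-Order_SFF}), integrate term by term using the summation form of the kernel and the elementary exponential integral, and telescope the resulting reciprocal sums into digamma functions via Eq.~(\ref{eq:Telescopic polygamma}), leaving the non-separable sum $\sum_{l=1}^N\psi_0(k+l)\psi_0(-k-l+N+1)$ as the obstruction to a fully closed form. Your identification of where the trigonometric prefactors come from and of the $k\to-k$ symmetry behind the $f(N,k)+f(N,-k)$ packaging is also correct.

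There is one concrete gap. You assert that the reciprocal sums ``reorganize into \dots the square $\psi_0(k+N)^2$,'' but you give no mechanism for this, and telescoping alone does not produce it. After the index reversal $l\mapsto N-1-l$ (which the paper uses to identify the two triple sums and halve the bookkeeping) and one application of Eq.~(\ref{eq:Telescopic polygamma}), the connected contribution becomes $\sum_{l=0}^{N-1}\bigl(\psi_0(k+1+l)-\psi_0(k-N+1+l)\bigr)^2$. Expanding the square leaves the diagonal pieces $\sum_{l}\psi_0^2(k+1+l)$ and $\sum_{l}\psi_0^2(k-N+1+l)$, and reducing \emph{these} to closed form requires the finite summation identity
\begin{align*}
\sum_{i=1}^{N}\psi_0^2(a+i)=(a+N)\psi_0^2(a+N)-(2a+2N-1)\psi_0(a+N)-a\psi_0^2(a)+(2a-1)\psi_0(a)+2N,
\end{align*}
which is the one nontrivial lemma in the paper's argument (Eq.~(\ref{eq:sum_identity_psi^2})); it is precisely what generates the isolated $(k+N)\psi_0(k+N)^2$ term and the polynomial-coefficient $\psi_0(k+N)$, $\psi_0(k)$ terms in $f(N,k)$, while only the cross term survives as the irreducible sum. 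Without citing or proving this identity your plan stalls one step short of the stated form. A minor correction: that irreducible object is a single sum over $l$ of products of two digammas, not a double sum.
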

Note that the difference between the real parts of the third-order SFF at $(k,-k)$ and $(k+1,-k-1)$, denoted by 
\begin{align}
    \Delta \mathrm{Re}\left(S^{(3)}_{N}(k)\right):=\mathrm{Re}\left(S_{N}^{(3)}(k+1,-k-1)-S_{N}^{(3)}(k,-k)\right),
\end{align}
admits a closed-form expression given by
    \begin{align}
        \Delta \mathrm{Re}\left(S^{(3)}_{N}(k)\right)&= \sin^2 (\pi  k) (\psi_0 (k-N+1)-\psi_0 (k+N+1)) \nn\\&\quad \times \left(\frac{\left(4 \pi ^2-1\right) N^2 }{\pi ^3 k (k+1)}+\frac{2}{\pi ^3}\psi_0 (k+1)\right)\nn\\
        &\quad+\frac{\sin ^2(\pi  k)}{\pi ^3}\left(\psi_0 (k+N+1)^2-\psi_0 (k-N+1)^2\right)\nn\\
        &\quad+2 \sin (2 \pi  k) (-\psi_0 (k-N+1)-\psi_0 (k+N+1)\nn\\
        &\quad+2 \psi_0 (k+1)).
    \end{align}

The third-order SFF $S_{N}^{(3)}(k,-k)$ presented in Eq.~(\ref{eq:Semi-closed third-order SFF}) is not a closed-form expression. When $k\in \mathbb{N}$, the expression simplifies to a closed form, however this a highly restrictive scenario. A more flexible and natural quantity to consider is $S_{N}^{(3)}(k,-\lfloor{k} \rfloor)$, where $\lfloor{k} \rfloor$ denotes the floor of $k$. The following proposition yields a closed-form expression for $S_{N}^{(3)}(k,b)$ defined by 
\begin{align}
        S_{N}^{(3)}(k,b):=\int_{0}^{2 \pi} \int_{0}^{2 \pi}e^{i(
        k\lambda+b\lambda')} N_{(3)}(\lambda, \lambda',0) \dd \lambda \dd \lambda' \label{eq:Third-Order_SFF (k,b)},
    \end{align}
    where $b \in \mathbb{N}$.
\begin{proposition} \label{3rd SFF k,floor k}
    The real part of the third-order SFF $S_{N}^{(3)}(k,b)$, as defined in Eq.~(\ref{eq:Third-Order_SFF (k,b)}), is given by 
   \begin{align}
       \mathrm{Re}\left(S_{N}^{(3)}(k,b)\right)&= \frac{N-\max (0,N-b )}{2 \pi }+\frac{\sin (2 \pi  k)}{4 \pi ^2}\bigg(\Theta (-b+N-1) \nn\\
       &\quad \times \bigg(2 (b+k-N) \psi_0 (b+k-N+1)-2 (b+k) \psi_0 (b+k+1)-2 k \psi_0 (k+1)\nn\\
       &\quad +2 (k+N) \psi_0 (k+N+1)\bigg)-(b+k-N) \psi_0 (b+k-N+1)\nn\\
       &\quad-(b+k+N) \psi_0 (b+k+N+1)-(k+N) \psi_0 (k+N+1)\nn\\
       &\quad+2 (b+k) \psi_0 (b+k+1)-(k-N) \psi_0 (k-N+1)+2 k \psi_0 (k+1)\bigg) \label{eq:SFF(k,b)},
   \end{align}
   where $b \in \mathbb{N}$, $k \in \mathbb{R}$, and $\Theta(x)$ denotes the unit step function defined as
   \begin{align*}
       \Theta(x)&=\begin{cases}
           1, &  \quad x \geq 0 \\
           0, & \quad x < 0.
       \end{cases}
   \end{align*}
\end{proposition}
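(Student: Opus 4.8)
The plan is to substitute the explicit expression for the triple-density correlation function $N_{(3)}(\lambda,\lambda',0)$ into the defining integral~(\ref{eq:Third-Order_SFF (k,b)}) and evaluate each of its twelve additive pieces separately, exploiting linearity. To do this I would first re-express every correlation function through the CUE kernel~(\ref{CUE_kernel}): the one-point density is $\rho(\theta)=N/(2\pi)$, the two-point function reads $\rho(\theta_1,\theta_2)=(N/2\pi)^2-\big|K_N(\theta_1,\theta_2)\big|^2$, and $\rho(\lambda,\lambda',0)$ is the $3\times 3$ determinant $\det[K_N(\theta_j,\theta_k)]$ with $(\theta_1,\theta_2,\theta_3)=(\lambda,\lambda',0)$. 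Expanding each kernel as the finite geometric sum $K_N(\theta,\phi)=\frac{1}{2\pi}\sum_{l=0}^{N-1}e^{il(\theta-\phi)}$ turns every $\rho$-term into a finite linear combination of separable exponentials $e^{in\lambda}e^{im\lambda'}$ with integer frequencies $n,m$, while the remaining pieces are delta functions or pure constants. The whole computation then reduces to a catalogue of elementary one-dimensional Fourier integrals.

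The decoupling that is crucial to obtaining a closed form comes from performing the $\lambda'$-integration first. Because $b\in\mathbb{N}$, one has $\int_0^{2\pi}e^{i(b+m)\lambda'}\dd\lambda'=2\pi\,\delta_{m,-b}$, so each double sum collapses to a single sum in which the $\lambda'$-frequency is pinned to $-b$. This is exactly where the combinatorial data of the answer is generated: requiring the frequency $-b$ to lie inside the admissible window of kernel indices $\{0,\dots,N-1\}$ forces the index ranges to truncate, which produces the indicator $\Theta(-b+N-1)$, while counting the surviving index pairs with a prescribed difference yields the factor $\max(0,N-b)$. Indeed, the purely rational leading term $\frac{N-\max(0,N-b)}{2\pi}$ can be traced to the three $\delta(\lambda)$-type contributions, whose $\lambda$-integral collapses trivially (no $k$-dependence) and whose $\lambda'$-integral produces clean Kronecker deltas. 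The subsequent $\lambda$-integration of the remaining terms uses $\int_0^{2\pi}e^{i(k+n)\lambda}\dd\lambda=\frac{e^{2\pi ik}-1}{i(k+n)}$ for integer $n$ and real $k$; since $\frac{e^{2\pi ik}-1}{i}=\sin(2\pi k)+2i\sin^2(\pi k)$ and all polygamma coefficients that multiply it turn out to be real, taking the real part at the end is precisely what supplies the overall prefactor $\frac{\sin(2\pi k)}{4\pi^2}$.

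After both integrations the surviving contributions are finite sums of the form $\sum_j \frac{1}{k+c+j}$ with integer shifts $c\in\{0,\pm b,\pm N,\dots\}$, which I would convert into differences of digamma functions by means of the telescoping identity~(\ref{eq:Telescopic polygamma}) at order $k=0$, namely $\sum_{i=0}^{M-1}\frac{1}{x+i}=\psi_0(x+M)-\psi_0(x)$. Matching the shifts and summation lengths then produces precisely the digamma arguments $b+k\pm N+1$, $k\pm N+1$, $b+k+1$ and $k+1$ appearing in~(\ref{eq:SFF(k,b)}). \textbf{I expect the main obstacle to be the bookkeeping of the $3\times 3$ determinant together with the simultaneous tracking of how the integer-$b$ constraint truncates the index ranges:} the six determinant terms, the three two-point pieces, and the constant and delta contributions each generate several digamma terms, and substantial cancellations occur among the ``disconnected'' parts and the subtracted mean-field terms. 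The final step is to collect carefully what remains, split it according to whether the truncation is active (the $\Theta(-b+N-1)$ block) or inactive, and verify that the resulting digamma structure coincides with the stated closed form.
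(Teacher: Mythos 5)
Your overall route coincides with the paper's: substitute the explicit $N_{(3)}(\lambda,\lambda',0)$, expand every correlation function through the kernel~(\ref{CUE_kernel}), integrate term by term, use $b\in\mathbb{N}$ so that the $\lambda'$-integral produces Kronecker deltas that pin one kernel index (this is indeed where $\Theta(-b+N-1)$ and $\max(0,N-b)$ originate), and read off the prefactor $\sin(2\pi k)$ from the real part of $(e^{2\pi i k}-1)/i$. All of that matches the paper's ``direct computation'' step and its intermediate expression.

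There is, however, one concrete gap at the closing step. The determinant terms of $\rho(\lambda,\lambda',0)$ and the $|K_N|^2$ parts of the two-point pieces carry two free kernel indices even after the $\lambda'$-pinning, so what survives the two integrations is not a collection of single sums $\sum_j \frac{1}{k+c+j}$ but double sums of the type $\sum_{p=\max(0,b)}^{\min(N-1,\,N-1+b)}\sum_{q=0}^{N-1}\frac{1}{k+p-q}$. Applying the telescoping identity~(\ref{eq:Telescopic polygamma}) to the inner index only converts these into single sums of digamma functions such as $\sum_{p}\bigl(\psi_0(1+k+p)-\psi_0(1+k-N+p)\bigr)$, which is not yet a closed form. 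To finish you need the additional finite summation identity $\sum_{i=1}^{N}\psi_{0}(a+i)=(N+a)\psi_{0}(N+a+1)-a\psi_{0}(a+1)-N$, i.e.\ Eq.~(\ref{eq:sum_identity_psi}) of the paper; it is precisely this identity that generates the linear prefactors $(k\pm N)$, $(b+k\pm N)$, $(b+k)$ multiplying the digammas in~(\ref{eq:SFF(k,b)}). As written, your plan invokes only the telescoping identity and would stall at an unevaluated sum of digammas. (Equivalently, one could re-index the double sum by $\alpha=p-q$ and evaluate the resulting weighted sum $\sum_\alpha w(\alpha)/(k+\alpha)$ by a Lemma~\ref{lemma:T(k,N)}-type computation, but some such device beyond plain telescoping is indispensable.)
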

The curve shown in Figure~\ref{fig:Third-order SFF} represents the third-order SFF, obtained as the Fourier transform of the triple-density correlation function at the point $(k, \lfloor{k}\rfloor)$. Unlike the usual second-order SFF, which captures pairwise correlations of eigenvalues, the third-order version probes higher-order statistical structure. The curve illustrates how these correlations evolve as a function of the Fourier variable $k$, starting at zero and gradually approaching saturation. Its oscillatory nature reflects the collective behavior among eigenvalues beyond simple pairwise interactions, highlighting the detailed structure of spectral correlations encoded in higher-order statistics.
\begin{figure}[H]
    \centering
    \includegraphics[width=0.6\linewidth]{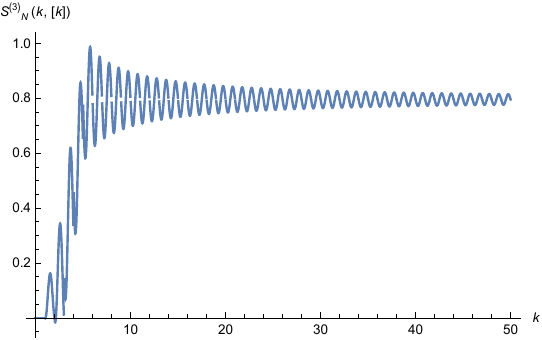}
    \caption{The third order SFF as given by Eq.~(\ref{eq:SFF(k,b)}) evaluated at the point $(k, \lfloor{k}\rfloor)$ with $N=5$.} 
    \label{fig:Third-order SFF}
\end{figure} 
\section{Proofs of Main Results} 
\label{sec:Proofs of the main results}
In this section, we prove the main results presented in Section~\ref{sec:Main results}. The proof proceeds by first expressing the CUE kernel $K_N(\theta, \phi)$ in its summation form as given in Eq.~(\ref{CUE_kernel}). Then performing the integrations in Eq.~(\ref{eq:Second order SFF CUE}) and Eq.~(\ref{eq:Third-Order_SFF}) produces expressions for the second-order and third-order SFF in terms of double and triple sums respectively. Then Eq.~(\ref{eq:Telescopic polygamma}) is used to introduce the digamma function into the double- and triple-sum expressions, which are further simplified using the finite summation identities provided in Eq.~(\ref{eq:sum_identity_psi^2}) and Eq.~(\ref{eq:sum_identity_psi}) to obtain the desired results.

The following lemma is useful in the proof of Proposition~\ref{thm:SFF}.
\begin{lemma} \label{lemma:T(k,N)}
    For $k \in \mathbb{R}\setminus  \{ -(N-1),...,(N-1) \}$ and $N \in \mathbb{N}$, let
    \begin{equation}
        T(k,N):=\sum_{\alpha=-(N-1)}^{N-1} \frac{N-|\alpha|}{k-\alpha}.
    \end{equation}
    Then we have 
    \begin{align}
        T(k,N)=(N+k)\left(\psi_0(N+k)-\psi_0(k)\right)- (N-k)\left(\psi_0(N-k)-\psi_0(-k)\right)-\frac{N}{k},
    \end{align}
    where $\psi_0(x)$ denotes the digamma function.
\end{lemma}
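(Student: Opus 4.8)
The plan is to evaluate the sum $T(k,N)=\sum_{\alpha=-(N-1)}^{N-1}\frac{N-|\alpha|}{k-\alpha}$ by splitting the symmetric range into three pieces and rewriting each in terms of digamma functions via the difference equation~(\ref{eq:Telescopic polygamma}) specialized to $k=0$, namely $\psi_0(x+M)-\psi_0(x)=\sum_{i=0}^{M-1}\frac{1}{x+i}$. First I would separate the $\alpha=0$ term, which contributes $\frac{N}{k}$, and then handle the positive and negative indices separately. For the positive part, writing $N-\alpha$ in the numerator and substituting, the telescoping identity converts the finite sum $\sum_{\alpha=1}^{N-1}\frac{N-\alpha}{k-\alpha}$ into digamma differences; similarly the negative part $\sum_{\alpha=1}^{N-1}\frac{N-\alpha}{k+\alpha}$ yields digammas with argument $k+\cdot$.

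The cleaner route, which I would actually carry out, is to recognize $N-|\alpha|$ as a triangular weight and write $\frac{N-|\alpha|}{k-\alpha}=\frac{N-|\alpha|}{k}+\frac{\alpha(N-|\alpha|)}{k(k-\alpha)}$ is not the most economical; instead I would directly manipulate
\begin{align*}
    \sum_{\alpha=1}^{N-1}\frac{N-\alpha}{k-\alpha}
    =\sum_{\alpha=1}^{N-1}\left(-1+\frac{N-k}{k-\alpha}\right)
    =-(N-1)+(N-k)\sum_{\alpha=1}^{N-1}\frac{1}{k-\alpha},
\end{align*}
and analogously
\begin{align*}
    \sum_{\alpha=1}^{N-1}\frac{N-\alpha}{k+\alpha}
    =\sum_{\alpha=1}^{N-1}\left(1-\frac{k+N}{k+\alpha}\right)\cdot(-1)+\cdots,
\end{align*}
where I would be careful with signs; the point is that after pulling out the constant, what remains are two harmonic-type sums $\sum_{\alpha=1}^{N-1}\frac{1}{k-\alpha}$ and $\sum_{\alpha=1}^{N-1}\frac{1}{k+\alpha}$. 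Each of these is exactly a difference of digamma functions: $\sum_{\alpha=1}^{N-1}\frac{1}{k+\alpha}=\psi_0(k+N)-\psi_0(k+1)$ and $\sum_{\alpha=1}^{N-1}\frac{1}{k-\alpha}=\psi_0(-k+1)-\psi_0(-k+N)$, using~(\ref{eq:Telescopic polygamma}) and the recurrence $\psi_0(x+1)=\psi_0(x)+\frac1x$ to shift between $\psi_0(k)$ and $\psi_0(k+1)$.

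The main step is then bookkeeping: collecting the coefficients $(N+k)$ and $(N-k)$ in front of the respective digamma pairs, shifting all arguments to match the target form $(N+k)(\psi_0(N+k)-\psi_0(k))-(N-k)(\psi_0(N-k)-\psi_0(-k))$, and verifying that the leftover constant terms $-(N-1)$ from each half combine with the shift terms $\pm\frac1k$ produced by the recurrence to leave precisely the isolated $-\frac{N}{k}$. I expect the sign discipline in the $k-\alpha$ sum to be the principal obstacle: the reflection of the index forces digammas with negative-shifted arguments $\psi_0(-k+N)$ versus $\psi_0(-k)$, and one must track which recurrence applications move $\psi_0(-k+1)$ to $\psi_0(-k)$ correctly. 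A clean way to checkpoint the algebra is to verify the claimed closed form against a direct partial-fraction evaluation at small $N$ (say $N=2$), and to confirm that the apparent poles at integer $k$ in the stated range cancel, consistent with the hypothesis $k\notin\{-(N-1),\dots,N-1\}$ being exactly what keeps every denominator nonzero.
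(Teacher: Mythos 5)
Your approach is essentially the paper's: split the sum into the positive- and negative-index halves, decompose the numerator so that constants $(N+k)$ and $(N-k)$ factor out, and convert the remaining harmonic-type sums into digamma differences via the difference equation~(\ref{eq:Telescopic polygamma}); the bookkeeping you describe does close up correctly. One sign slip in your displayed line: $\frac{N-\alpha}{k-\alpha}=1+\frac{N-k}{k-\alpha}$, so the constant is $+(N-1)$ rather than $-(N-1)$ (it then cancels against the $-(N-1)$ from the other half); the paper sidesteps the recurrence shifts altogether by keeping $\alpha=0$ in both half-sums and subtracting the doubly counted $\frac{N}{k}$, which makes the arguments $\psi_0(k)$ and $\psi_0(-k)$ appear directly without shifting from $\psi_0(k+1)$ and $\psi_0(1-k)$.
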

\begin{proof}
    The derivation goes as follows :
\begin{align}
   \sum_{\alpha=-(N-1)}^{N-1} \frac{N-|\alpha|}{k-\alpha}&=\sum_{\alpha=0}^{N-1} \frac{N-\alpha}{k+\alpha}+\sum_{\alpha=0}^{N-1} \frac{N-\alpha}{k-\alpha} -\frac{N}{k} \nn\\
    &= (N+k)\sum_{\alpha=0}^{N-1} \frac{1}{(k+\alpha)}-(N-k)\sum_{\alpha=0}^{N-1} \frac{1}{(-k+\alpha)}-\frac{N}{k} \label{Eqn_T_1}\\
    &= (N+k)\left(\psi_0(N+k)-\psi_0(k)\right)- (N-k)\left(\psi_0(N-k)-\psi_0(-k)\right)\nn\\
    &\quad-\frac{N}{k}, \label{Eqn_T_2}
\end{align}
where in Eq.~(\ref{Eqn_T_1}), we have inserted Eq.~(\ref{eq:Telescopic polygamma}) to arrive at Eq.~(\ref{Eqn_T_2}). This completes the proof.
\end{proof}

\subsection{Proof of Proposition~\ref{thm:SFF}}
Here, we provide a proof of Proposition~\ref{thm:SFF}.
\begin{proof}
Since the second order SFF is symmetric in $k$, it is sufficient to assume $k \geq 0$ for the proof. In terms of the CUE kernel, the second-order SFF, as defined in Eq.~(\ref{eq:Second order SFF CUE}), can be expressed as
\begin{align}
    S_N(k) &= N- \int_{0}^{2\pi} \int_{0}^{2\pi}  e^{i k (\theta -\phi)} K_{N}(\theta,\phi) K_{N}(\phi, \theta) \dd \theta \dd \phi,
\end{align}
which upon inserting the summation form of the kernel $K_N(\theta,\phi)$ as given in Eq.~(\ref{CUE_kernel}), and performing the integration becomes
\begin{align}
     S_N(k)
    &=N-\frac{1}{4\pi^2}\sum_{l,m=0}^{N-1} \int_{0}^{2\pi} \int_{0}^{2\pi} e^{i(k+l-m)(\theta-\phi)} \dd \theta \dd \phi\label{A_1}\\
    &=
    \begin{cases}
        N- \frac{1}{2\pi^2} \left(1-\cos(2\pi k)\right) \sum_{l,m=0}^{N-1} \frac{1}{(k+l-m)^{2}}, & \quad k \notin \{-(N-1),...,(N-1) \} \\
         \min\left(|k|,N\right), & \quad k \in \{-(N-1),...,(N-1) \}. 
    \end{cases}
     \label{A_2}
\end{align}

First, we consider the case $k \notin \{-(N-1),...,N-1 \}$. To rewrite the summation in Eq.~(\ref{A_2}) in terms of a new variable $\alpha:=l-m$, we need to count the number of solutions of the equation $l-m=\alpha$ with the constraints $0 \leq l,m \leq N-1$ and $-(N-1) \leq \alpha \leq N-1$. Let us assume that $\alpha \geq 0$. Possible solutions for the pair $(l,m)$ are $\{(\alpha,0), (\alpha+1,1),...,(\alpha+m_0,m_0)\}$, where $\alpha+m_0=N-1$. Hence, the number of solutions is $m_0+1=N-\alpha$. When $\alpha <0$, one can find similarly that the number of solutions is $N-|\alpha|$. Now, writing the summation in Eq.~(\ref{A_2}) in terms of $\alpha$,  we can express it as
\begin{align}
    S_N(k)&= N- \frac{1}{4\pi^2} 2\left(1-\cos(2\pi k)\right) \sum_{\alpha=-(N-1)}^{N-1} \frac{N-|\alpha|}{(k+\alpha)^{2}}\nn\\ 
    &=\frac{\sin^2(\pi |k|)}{\pi^2} \Big( \psi_0(N+|k|) + (N+|k|) \psi_{1}(N+|k|) + \psi_0(N-|k|) \nn\\
    &\quad+ (N-|k|) \psi_{1}(N-|k|) + |k| \psi_{1}(-|k|)-|k| \psi_{1}(|k|)- \psi_0(|k|)-\psi_0(-|k|)\Big) \label{C_1},
\end{align}
where we have used Lemma~(\ref{lemma:T(k,N)}) and the fact that $S_N(k)$ is symmetric in $k$, to arrive at the last equation. Using the reflection formula~(\ref{eq:reflection}), we rewrite Eq.~(\ref{C_1}) as
 \begin{equation}
    S_N(k)= 
       \begin{cases}
             |k|- \frac{\sin(2\pi |k|)}{2\pi}+\frac{\sin^2(\pi |k|)}{\pi^2} \Big( \psi_0(N+|k|) + (N+|k|) \psi_{1}(N+|k|) \\+ \psi_0(N-|k|) + (N-|k|) \psi_{1}(N-|k|)
    -2|k| \psi_{1}(|k|)- 2\psi_0(|k|)\Big),   &  \quad  |k|< N   \\ 
             N+\frac{\sin^2(\pi |k|)}{\pi^2} \Big( \psi_0(N+|k|) + (N+|k|) \psi_{1}(N+|k|) \\+\psi_0(|k|-N)+(|k|-N)\psi_{1}(|k|-N)
    -2|k| \psi_{1}(|k|)- 2\psi_0(|k|)\Big),   & \quad |k|>N,
       \end{cases} \label{SS_1}
 \end{equation}
which can be compactly written as follows
 \begin{align}
 \nn
    S_N(k)= \mathrm{min}\bigg(N,|k|- \frac{\sin(2\pi |k|)}{2\pi}\bigg)+\frac{\sin^2(\pi |k|)}{\pi^2} \Big( \psi_0(N+|k|) + (N+|k|) \psi_{1}(N+|k|) \\+ \psi_0\Big(\big|N-|k|\big|\Big) + \Big(\big|N-|k|\big|\Big) \psi_{1}\Big(\big|N-|k|\big|\Big)
    -2|k| \psi_{1}(|k|)- 2\psi_0(|k|)\Big). \label{AA_1}
 \end{align}
 
 When $k \in \{ -(N-1),...,(N-1) \}$, all the term in the summation of Eq.~(\ref{A_1}) vanishes after performing the integration except the one with $\alpha=k$, which gives
\begin{align}
    S_{N}(k)&=N-\frac{1}{4 \pi^2}4 \pi^2(N-|k|)=|k|. \label{SS_2}
\end{align}
Although, in Eq.~(\ref{AA_1}), the terms $\psi_0(|k|)$ and $ \psi_{1}(|k|)$ have singularity at $|k|=0$, and the terms $ \psi \left(\big|N-|k|\big|\right)$ and $ \psi_{1}\left(\big|N-|k|\big|\right)$ have singularity at $|k|=N$, the limits
\begin{align}
    &\lim_{k\to 0} \psi_0(|k|)+|k|\psi_1(|k|),\\
    &\lim_{k\to N} \psi_0(|N-|k||)+|N-|k||\psi_1(|N-|k||)
\end{align}
exist, and can be found using the relation
 \begin{align}
     x \psi_{1}(x)+\psi_0(x)=x \psi_{1}(x+1)+\psi_0(x+1). \label{SHW_4}
 \end{align}
Applying the relation~(\ref{SHW_4}), we combine  Eq.~(\ref{AA_1}) and ~(\ref{SS_2}) to arrive at the following final form of the second-order SFF
 \begin{align}
 \nn
    S_N(k)= \mathrm{min}\bigg(N,|k|- \frac{\sin(2\pi |k|)}{2\pi}\bigg)+\frac{\sin^2(\pi |k|)}{\pi^2} \Big( \psi_0(N+|k|) + (N+|k|) \psi_{1}(N+|k|) \\+ \psi_0\Big(\big|N-|k|\big|+1\Big) + \Big(\big|N-|k|\big|\Big) \psi_{1}\Big(\big|N-|k|\big|+1\Big)
    -2|k| \psi_{1}(|k|+1)- 2\psi_0(|k|+1)\Big).
 \end{align}
 This completes the proof.
\end{proof}
\subsection{Proof of Corollary~\ref{prop:k=Nt asymptotic}}
The proof of Corollary~\ref{prop:k=Nt asymptotic} is provided below.
\begin{proof}
     As $x \to \infty$, the asymptotic expansions of $\psi_0(x)$ and $\psi_{1}(x)$ up to $\mathcal{O}(x^{-2})$ and $\mathcal{O}\left(x^{-3}\right)$ are respectively given by~\cite{Luke}  $\psi_0(x)=\ln x-\frac{1}{2x}+\mathcal{O}(x^{-2})$ and $\psi_{1}(x)=\frac{1}{x}+\frac{1}{2x^2}+\mathcal{O}\left(x^{-3}\right)$, which yields  
\begin{align}
    \psi_0(x)+x \psi_{1}(x) =1+\ln x+\mathcal{O}\left(x^{-2}\right) \label{X1},
\end{align}  
and 
\begin{align}
    \psi_0(x+1)+x \psi_{1}(x+1)
    &=1+\ln(x+1)-\frac{1}{x+1}+\mathcal{O}\left(x^{-2}\right)\label{X2}.
\end{align}  
    Since $k$ is of the order of $N$, i.e., $k=t N$ with $t$ being the local coordinate, using Eq.~(\ref{X1}), we have
    \begin{align}
    \psi_0(N+|k|) + (N+|k|) \psi_{1}(N+|k|)&=\psi_0(N(1+|t|)) + (N(1+|t|)) \psi_{1}(N(1+|t|)) \nn\\
    &= 1+ \ln(N(1+|t|))+\mathcal{O}\left(N^{-2}\right) \label{X3}.
\end{align}
When $t\neq 1$, Eq~(\ref{X2}) leads to
\begin{align}
    &\psi_0\Big(1+\big|N-|k|\big|\Big) + \big|N-|k|\big| \psi_{1}\Big(1+\big|N-|k|\big|\Big) \nn\\
    &=\psi_0\Big(1+N\big|1-|t|\big|\Big) + N\big|1-|t|\big| \psi_{1}\Big(1+N\big|1-|t|\big|\Big) \nn\\
    &= 1+\ln(N\big|1-|t|\big|+1)-\frac{1}{N\big|1-|t|\big|+1}+\mathcal{O}\left(N^{-2}\right), \label{X4}
\end{align}
and for $t \neq 0$, Eq~(\ref{X2}) leads to
\begin{align}
    \psi_0(|k|+1) + |k|\psi_{1}(|k|+1)&=\psi_0(N|t|+1) + N|t|\psi_{1}(N|t|+1) \nn\\
    &= 1+ \ln(N|t|+1)-\frac{1}{N|t|+1}+\mathcal{O}\left(N^{-2}\right)\label{X5}.
\end{align}
Inserting Eq.~(\ref{X3}), Eq.~(\ref{X4}), and Eq.~(\ref{X5}) in Eq.~(\ref{exact_SFF_1}), and simplifying, we obtain
\begin{align*}
  S_N(tN)
  &=  \min\left(N,N|t|- \frac{\sin(2\pi N|t|)}{2\pi}\right)+\frac{\sin^2(\pi N|t|)}{\pi^2} \Bigg( \ln \frac{\big|1-|t|^2\big|}{|t|^2} \nn\\
  &\quad+\frac{1}{N\big|1-|t|\big|}- \frac{2}{N|t|}-\frac{1}{N\big|1-|t|\big|+1} +\frac{2}{N|t|+1}+\mathcal{O}\left(N^{-2}\right)\Bigg).
\end{align*}
As $\frac{1}{x}-\frac{1}{1+x}=\frac{1}{x^2}+\mathcal{O}\left(x^{-3}\right)$, we have
 \begin{align}
      S_N(tN)
  &=  \min\left(N,N|t|- \frac{\sin(2\pi N|t|)}{2\pi}\right)+\frac{\sin^2(\pi N|t|)}{\pi^2} \Bigg( \ln \frac{\big|1-|t|^2\big|}{|t|^2}+\mathcal{O}\left(N^{-2}\right)\Bigg).
 \end{align}
 The above equation is valid for $t \in \mathbb{R} \setminus \{0,1 \}$. For $t \in \{0,1 \}$, Eq.~(\ref{exact_SFF_1}) immediately yields,
 \begin{align}
     S_N(tN)=\mathrm{min}(N,N |t|).
 \end{align}
 Now, with the observation
 \begin{align}
     \lim_{t \to t_0} \sin^2(\pi N|t|)  \ln \frac{\big|1-|t|^2\big|}{|t|^2}=0, \quad t_0 \in \{0,1\},
 \end{align}
 we conclude the proof.
 \end{proof}
\subsection{Proof of Corollary~\ref{prop:k_O(1)}}
The proof of Corollary~\ref{prop:k_O(1)} is provided below.
\begin{proof}
    From Eq.~(\ref{X1}) and~(\ref{X2}), we have 
    \begin{align}
         \psi_0(N+|k|) + (N+|k|) \psi_{1}(N+|k|)&=1+ \ln(N+|k|)+\mathcal{O}\left(N^{-2}\right),
    \end{align}
 and 
\begin{align}
    &\psi_0\Big(1+\big|N-|k|\big|\Big) + \big|N-|k|\big| \psi_{1}\Big(1+\big|N-|k|\big|\Big)\nn\\
    &= 1+\ln(\big|N-|k|\big|+1)-\frac{1}{\big|N-|k|\big|+1}+\mathcal{O}\left(N^{-2}\right).
\end{align}   
Inserting the above equations in Eq.~(\ref{exact_SFF_1}) leads to
\begin{align}
     S_N(k)&= \mathrm{min}\bigg(|k|- \frac{\sin(2\pi |k|)}{2\pi},N\bigg)+\frac{\sin^2(\pi |k|)}{\pi^2} \bigg( 1+ \ln(N+|k|) \nn \\
     &\quad+ 1+\ln \left(\big|N-|k|\big|+1\right)-\frac{1}{\big|N-|k|\big|+1}
    -2|k| \psi_{1}(|k|+1)- 2\psi_0(|k|+1)\nn\\
    &\quad+\mathcal{O}\left(N^{-2}\right)\bigg)\nn\\
     &=\frac{\sin^2(\pi |k|)}{\pi^2} \Big( 2+ 2\ln N
    -2|k| \psi_{1}(|k|+1)- 2\psi_0(|k|+1)\Big)\nn\\
    &\quad+\mathrm{min}\bigg(|k|- \frac{\sin(2\pi |k|)}{2\pi},N\bigg)+\mathcal{O}\left(N^{-2}\right) \label{SYW}.
\end{align}
Since $k=\mathcal{O}(1)$ and $N \to \infty$, we have
\begin{align}
    \mathrm{min}\bigg(|k|- \frac{\sin(2\pi |k|)}{2\pi},N\bigg)= |k|- \frac{\sin(2\pi |k|)}{2\pi} \label{SYW1}
\end{align}
Inserting Eq.~(\ref{SYW1}) in Eq.~(\ref{SYW}) completes the proof.
\end{proof}
\subsection{Proof of Corollary~\ref{prop:Arbitrary_point_x_0}}
The proof of Corollary~\ref{prop:Arbitrary_point_x_0} is provided below.
\begin{proof}
The asymptotic expansion of $\psi_0(x)$ and $\psi_{1}(x)$ up to $\mathcal{O}(x^{-4})$ and $\mathcal{O}(x^{-5})$, respectively, are given by~\cite{Luke}
\begin{align}
    \psi_0(x)&=\ln x-\frac{1}{2x}-\frac{1}{12x^2}+\mathcal{O}\left(x^{-4}\right),\\
    \psi_{1}(x)&=\frac{1}{x}+\frac{1}{2x^2}+\frac{1}{6x^3}+\mathcal{O}\left(x^{-5}\right),
\end{align}
which implies that 
\begin{align}
    \psi_0(x)+x \psi_{1}(x) =1+\ln(x)+\frac{1}{12 x^2}+\mathcal{O}\left(x^{-4}\right). \label{eq:asymptotic_4th_power}
\end{align} 
In the following, we analyze three different cases separately.
\\

Case-I: Here we consider the case $0 <x_0<1$. 
From Eq.~(\ref{eq:asymptotic_4th_power}), we have 
\begin{align}
    &\psi_0(N+\modu{Nx_0+\tau})+\left(N+\modu{Nx_0+\tau}\right) \psi_{1}(N+\modu{Nx_0+\tau})\nn\\
    &=1+\ln N+\ln(1+x_0)+\frac{\tau}{(1+x_0)N}+\frac{(1-6\tau^2)}{12(1+x_0)^2N^2}\nn\\
    &\quad+\frac{(2\tau^3-\tau)}{6(1+x_0)^3N^3}+\mathcal{O}\left(N^{-4}\right), \label{eq:Asymp_1}
\end{align}

\begin{align}
    &\psi_0(\modu{N-\modu{N x_0+\tau}})+\left(\modu{N-\modu{N x_0+\tau}}\right) \psi_{1}(\modu{N-\modu{N x_0+\tau}}) \nn\\
    &=1+\ln N+\ln(1-x_0)-\frac{\tau}{(1-x_0)N}+\frac{(1-6\tau^2)}{12(1-x_0)^2N^2}\nn\\
    &\quad-\frac{(4\tau^3-2\tau)}{12(1-x_0)^3N^3}+\mathcal{O}\left(N^{-4}\right),  \label{eq:Asymp_2}
\end{align}
and 
\begin{align}
    &\psi_0(\modu{N x_0+\tau})+\left( \modu{N x_0+\tau}\right) \psi_{1}(\modu{N x_0+\tau})\nn\\ 
    &=1+\ln N+\ln x_0+\frac{\tau}{x_0 N}+\frac{1-6\tau^2}{12x_0^2N^2}+\frac{2\tau^3-\tau}{6x_0^3N^3}+\mathcal{O}\left(N^{-4}\right).   \label{eq:Asymp_3}
\end{align}
Inserting Eqs.~(\ref{eq:Asymp_1}), (\ref{eq:Asymp_2}) and (\ref{eq:Asymp_3}) in Eq.~(\ref{AA_1}) gives 
\begin{align}
    S_N\left(N x_0+\tau \right)
    &= x_0+\frac{\tau}{N}- \frac{\sin(2\pi (N x_0+\tau))}{2\pi N}+\frac{\sin^2(\pi (N x_0+\tau))}{\pi^2 N} \Bigg( \ln \frac{1-x_0^2}{x_0^2}\nn\\
    &\quad+ \frac{t}{N}\B{\frac{1}{1+x_0}-\frac{1}{1-x_0}-\frac{2}{x_0}}+\frac{1-6\tau^2}{12N^2}\bigg(\frac{1}{(1+x_0)^2}+\frac{1}{(1-x_0)^2}\nn\\
    &\quad-\frac{2}{x_0^2}\bigg)+\frac{4\tau^3-2\tau}{12N^3}\B{\frac{1}{(1+x_0)^3}-\frac{1}{(1-x_0)^3}-\frac{2}{x_0^3}}+\mathcal{O}\left(N^{-4}\right)\Bigg). 
\end{align}
Using the above expansion of $S_N\left(N x_0+\tau \right)$ in Eq.~(\ref{eq:Fluctuation}), we obtain 
\begin{align*}
    \Delta(x_0,\tau)
    &=\tau- \frac{\sin(2\pi (N x_0+\tau))}{2\pi}+\frac{\sin^2(\pi (Nx_0+\tau))}{\pi^2 } \Bigg( \ln \frac{1-x_0^2}{x_0^2}\nn\\
    &\quad + \frac{\tau}{N}\B{\frac{1}{1+x_0}-\frac{1}{1-x_0}-\frac{2}{x_0}}+\frac{1-6\tau^2}{12N^2}\B{\frac{1}{(1+x_0)^2}+\frac{1}{(1-x_0)^2}-\frac{2}{x_0^2}}\nn\\
    &\quad+\frac{2\tau^3-\tau}{6N^3}\B{\frac{1}{(1+x_0)^3}-\frac{1}{(1-x_0)^3}-\frac{2}{x_0^3}}+\mathcal{O}\left(N^{-4}\right)\Bigg). 
\end{align*}
\\

Case-II: 
Here, we consider the case $x_0=0$, for which the asymptotic expansions in Eq.~(\ref{eq:Asymp_1}) and (\ref{eq:Asymp_2}) remain valid, however the expression in Eq.~(\ref{eq:Asymp_3}) is no longer applicable. The appropriate expression is given by
\begin{align}
   \psi_0(\modu{N x_0+\tau})+\left(\modu{N x_0+\tau} \right) \psi_{1}(\modu{N x_0+\tau})&= \psi_0(\modu{\tau})+ \modu{\tau} \psi_{1}(\modu{\tau}). \label{SHW_2}
\end{align}
Inserting Eqs.~(\ref{eq:Asymp_1}), (\ref{eq:Asymp_2}) with $x_0=0$, along with Eq.~(\ref{SHW_2}) into Eq.~(\ref{AA_1}), we have
\begin{align*}
     \Delta(0,\tau)
    &=\frac{\sin^2(\pi |\tau|)}{\pi^2 } \bigg(2+2\ln N+\frac{1-6\tau^2}{6N^2}-2\psi_0(\modu{\tau})-2 \modu{\tau} \psi_{1}(\modu{\tau})+\mathcal{O}\left(N^{-4}\right)\bigg) \\
    &\quad+|\tau|- \frac{\sin(2\pi \tau)}{2\pi}.
\end{align*}
\\

Case-III:
When $x_0=1$, the asymptotic expansions given in Eq.~(\ref{eq:Asymp_1}) and (\ref{eq:Asymp_3}) are valid, however the expansion in Eq.~(\ref{eq:Asymp_2}) is no longer applicable. The correct expression is given by
\begin{align}
    &\psi_0(\modu{N-\modu{N x_0+\tau}})+\left(\modu{N-\modu{N x_0+\tau}} \right) \psi_{1}(\modu{N-\modu{N x_0+\tau}})\nn\\
    &=\psi_0(\modu{\tau})+\modu{\tau} \psi_{1}(\modu{\tau}). \label{SHW_3}
\end{align}
Inserting Eqs.~(\ref{eq:Asymp_1}), (\ref{eq:Asymp_3}) with $x_0=1$, and Eq.~(\ref{SHW_3}) into Eq.~(\ref{AA_1}), we have
\begin{align*}
    \Delta(1,\tau)
     &=\tau- \frac{\sin(2\pi \tau)}{2\pi}+\frac{\sin^2(\pi \tau)}{\pi^2 } \bigg(-1-\ln N+ \ln 2- \frac{3\tau}{2N}-\frac{7(1-6\tau^2)}{48N^2}\\
    &\quad-\frac{5(2\tau^3-\tau)}{16N^3}+\psi_0(\modu{\tau})+\modu{\tau} \psi_{1}(\modu{\tau})\bigg)+\mathcal{O}\left(N^{-4}\right)
\end{align*}
when $\tau <0$,
and
\begin{align*}
    \Delta(1,\tau) &=\frac{\sin^2(\pi \tau)}{\pi^2 } \bigg(-1-\ln N+ \ln 2- \frac{3\tau}{2N}-\frac{7(1-6\tau^2)}{48N^2}\\
    &\quad-\frac{5(2\tau^3-\tau)}{16N^3}+\psi_0(\modu{\tau})+\modu{\tau} \psi_{1}(\modu{\tau})\bigg)+\mathcal{O}\left(N^{-4}\right)
\end{align*}
when $\tau >0$.
This completes the proof.
\end{proof}
\subsection{Proof of Proposition~\ref{prop: 3rd_order_SFF_at_(k,-k)}}
Here, we present the proof of Proposition~\ref{prop: 3rd_order_SFF_at_(k,-k)}.
\begin{proof}
The computation of the integral in Eq.~(\ref{eq:Third-Order_SFF}) gives
\begin{align}
\mathrm{Re}\left(S_{N}^{(3)}(k,-k)\right)&=\frac{2 - 2 \cos(2 \pi k)}{(2 \pi)^3} 
\Bigg(
\sum_{l=0}^{N-1}\sum_{m=0}^{N-1}\sum_{p=0}^{N-1} 
\frac{1}{(k+l-p)(k+l-m)}
\nn\\
&\quad+
\sum_{l=0}^{N-1}\sum_{m=0}^{N-1}\sum_{p=0}^{N-1} 
\frac{1}{(k-l+p)(k-l+m)}
\Bigg)+ \frac{N^2(1-\cos(2 \pi k))}{k \pi}\,\psi_{0}(k) \nonumber \\
&\quad - \frac{(k-N)\sin(k \pi)}{k \pi^3}
\left( 4 k \pi^3 \cos(k \pi) + N(1-2 \pi^2)\sin(k \pi)\right)
B(-N,k) \nonumber \\
&\quad + (k+N)\sin(k \pi)\left(-4\cos(k \pi) 
+ \frac{N(-1+2 \pi^2)}{k \pi^3}\sin(k \pi)\right)
B(N,k) \nonumber \\
&\quad + \frac{N(1-\cos(2 \pi k))}{\pi}B(N,-k)- \frac{N^2(1-\cos(2 \pi k))}{k \pi}\,\psi_{0}(k-N) \nonumber \\
&\quad- \frac{N(1-\cos(2 \pi k))}{k \pi}\left((2k+N)A(-k)-(k+N)B(-N,-k) \right)
 \nonumber \\
&\quad + 2 \sin(k \pi) \left( 4k \cos(k \pi) 
+ \frac{N(1-2 \pi^2)}{\pi^3}\sin(k \pi)\right) 
A(k) \nonumber \\
&\quad + \frac{N}{2\pi} - 2N\pi 
- \frac{N^3(2-2\cos(2 \pi k))}{8k^2 \pi^3},
\end{align}
where $A(k):=\psi_{0}(1+k)-\psi_{0}(1)$ and $B(N,k):=\psi_{0}(1+k+N)-\psi_{0}(1)$. Using the fact that summing in the reverse order corresponds to the transformations $l \mapsto (N-1-l)$,  $m \mapsto N-1-m$ and  $p \mapsto N-1-p$, we have
\begin{align}
    \sum_{l=0}^{N-1} \sum_{m=0}^{N-1} \sum_{p=0}^{N-1} 
\frac{1}{(k+l-m)(k+l-p)}&= \sum_{l=0}^{N-1} \sum_{m=0}^{N-1} \sum_{p=0}^{N-1} 
\frac{1}{(k-l+m)(k-l+p)} \nonumber \\
&=\sum_{l=0}^{N-1} \left(\psi_{0}(k+1+l)-\psi_{0}(k-N+1+l)\right)^2, \label{eq:CUE2}
\end{align}
where, to arrive at Eq.~(\ref{eq:CUE2}), we have used the difference equation for digamma function as given in Eq.~(\ref{eq:Telescopic polygamma}).
Using the finite summation identity~\cite{Milgram,Lu_Wei_2,Lu_Wei_1,Lu_Wei_3}
\begin{align}
\sum_{i=1}^{N} \psi_{0}^2(a+i)
=&(a+N) \psi _0^2(a+N){}-(2 a+2 N-1) \psi _0(a+N)-a \psi _0^2(a){}\nonumber\\
&+(2 a-1) \psi _0(a)+2 N, \label{eq:sum_identity_psi^2}
\end{align}
in Eq.~(\ref{eq:CUE2}), we obtain the desired result and complete the proof.
\end{proof}
\subsection{Proof of Proposition~\ref{3rd SFF k,floor k}}
The proof of Proposition~\ref{3rd SFF k,floor k} proceeds as follows.
\begin{proof}
A direct computation of the integral in Eq.~(\ref{eq:Third-Order_SFF}) with the fact that $b \in \mathbb{N}$ gives
\begin{center}
\scalebox{1}{$
\begin{aligned}
&\mathrm{Re}\left(S_{N}^{(3)}(k,b)\right)\\
&=-\frac{ N \sin (2 \pi  k)\, }{(2 \pi )^2}\delta_{b,0} 
\left(\frac{N^2}{k}-\sum_{p=0}^{N-1}\sum_{q=0}^{N-1}\frac{1}{k+p-q}\right)-\frac{N \sin (2 \pi  k)\,}{(2 \pi )^2}\delta_{b,0} 
\sum_{p=0}^{N-1}\sum_{q=0}^{N-1}\frac{1}{k+p-q} \\
&\quad + \frac{\sin (2 \pi  k) 
}{(2 \pi )^2}\sum_{p=\max(0,b)}^{\min(N-1,b+N-1)} \sum_{q=0}^{N-1} \frac{1}{k+p-q}+ \frac{\sin (2 \pi  k) 
}{(2 \pi )^2}\sum_{q=\max(0,-b)}^{\min(N-1,-b+N-1)} \sum_{p=0}^{N-1} \frac{1}{k+p-q} \\
&\quad - \frac{ N \sin (2 \pi  k) \bigl(N^2 \delta_{b,0}-\max(N-|b|,0)\bigr)}{(2 \pi )^2 k} - \frac{N \sin (2 \pi  k)\, \max(N-|b|,0)}{(2 \pi )^2 k}- \frac{N^2 \sin (2 \pi  k)}{(2 \pi )^2 k} 
 \\
&\quad + \left(\frac{N^2 \delta_{b,0}}{2 \pi } - \frac{\max(N-|b|,0)}{2 \pi }\right) + \frac{ N^3 \sin (2 \pi  k)\, \delta_{b,0}}{2 \pi^2k}- \frac{ N^2 \delta_{b,0}}{2 \pi } - \frac{N^2 \sin (2 \pi  k)}{(2 \pi )^2 (b+k)} \\
&\quad + \sin (2 \pi  k) \left(
\frac{N^2}{(2 \pi )^2 (b+k)} 
- \frac{1}{(2 \pi )^2}\sum_{p=0}^{N-1}\sum_{q=0}^{N-1}\frac{1}{b+k+p-q}
\right) \nonumber\\
&\quad+ \sin (2 \pi  k) \left(
\frac{N^2}{(2 \pi )^2 k} 
- \frac{1}{(2 \pi )^2}\sum_{p=0}^{N-1}\sum_{q=0}^{N-1}\frac{1}{k+p-q}
\right)+ \frac{N}{2 \pi }, \\
\end{aligned}
$}
\end{center}
which further simplifies to 
\begin{align}
&\mathrm{Re}\left(S_{N}^{(3)}(k,b)\right)\nonumber\\
&=\frac{\sin(2 \pi k)}{(2 \pi)^2} \Bigg(
\sum_{p=\max(0,b)}^{\min(N-1,\,N-1+b)} 
\Big(\psi_{0}(1+k+p)-\psi_{0}(1+k-N+p)\Big) \nonumber\\
&\quad + \sum_{q=\max(0,-b)}^{\min(N-1,\,N-1-b)} 
\Big(-\psi_{0}(k-q)+\psi_{0}(k+N-q)\Big) + 2k\,\psi_{0}(1+k) \nonumber\\
&\quad+ 2(b+k)\,\psi_{0}(1+b+k)- (k-N)\,\psi_{0}(1+k-N) \nonumber \\
&\quad - (b+k-N)\,\psi_{0}(1+b+k-N) 
- (k+N)\,\psi_{0}(1+k+N) 
\nonumber\\
&\quad- (b+k+N)\,\psi_{0}(1+b+k+N) 
\Bigg)+ \frac{N - \max\bigl(0,\,N-|b|\bigr)}{2\pi}. \nonumber
\end{align}
Now, using the summation identity~\cite{Milgram,Lu_Wei_1,Lu_Wei_2,Lu_Wei_3}
\begin{align}
\sum_{i=1}^{N}\psi_{0}(a+i)=(N+a)\psi_{0}(N+a+1)-a\psi_{0}(a+1)-N, \label{eq:sum_identity_psi}
\end{align}
we obtain the desired result, thus concluding the proof.
\end{proof}
\section{Conclusions}\label{Discussions}
The SFF, a central quantity in random matrix theory, serves as a powerful tool for probing universality in disordered quantum systems and diagnosing quantum chaos. In this work, we derived exact closed-form expressions for the second- and third-order SFFs in CUE, valid for all real values of the time parameter, and analyzed their asymptotic behavior across different regimes. For the second-order SFF, we obtained an exact closed-form expression in terms of polygamma functions. In the large-matrix limit and for integer time values, this expression reproduces the well-established results from earlier studies. When the time parameter is of order one relative to the matrix size, the second-order SFF exhibits logarithmic growth with the ensemble dimension. For the third-order SFF, we derived a closed-form expression in a special case by exploiting the translational invariance of CUE. Future work will focus on deriving closed-form expressions for the exact many-body SFF of noninteracting fermions.
\backmatter
\bmhead{Acknowledgment} The work of Lu Wei was supported by the U.S. National Science Foundation (2306968) and the U.S. Department of Energy (DE-SC0024631).

\end{document}